\pdfoutput=1
\documentclass[conference, a4paper]{IEEEtran}
\usepackage{cite}
\usepackage{indentfirst}
\usepackage{graphicx}
\usepackage{changepage}
\usepackage{stfloats}
\usepackage{amsfonts,amssymb}
\usepackage{bm}
\usepackage{algorithm}
\usepackage{algorithmic}
\usepackage{amsmath}
\usepackage{amsmath,amssymb,amsfonts}
\usepackage{times}
\usepackage{url}
\usepackage{cite}
\usepackage{textcomp}
\usepackage{xcolor}
\usepackage{color}
\usepackage{setspace}
\usepackage{enumitem}
\usepackage{float}
\usepackage{diagbox}
\usepackage[T1]{fontenc}
\usepackage[utf8]{inputenc}
\usepackage{threeparttable}
\usepackage{booktabs}
\usepackage{footnote}
\usepackage{graphicx}
\usepackage{pifont}
\usepackage{subfigure}
\usepackage{mathrsfs}
\allowdisplaybreaks[4]
\setlength{\columnsep}{0.201 in}


\bibliographystyle{IEEEtran}
\newenvironment{proof}{{\indent  \it Proof:}}{\hfill $\blacksquare$}

\begin{document}
	
	\title{Cooperative Sensing and Communication for ISAC Networks: Performance Analysis and Optimization}
	
	\author{
		\IEEEauthorblockN{Kaitao Meng\IEEEauthorrefmark{1} and Christos Masouros\IEEEauthorrefmark{1}}
		
		\IEEEauthorblockA{\IEEEauthorrefmark{1}Department of Electronic and Electrical Engineering, University College London, UK}
		
		Emails: \IEEEauthorrefmark{1}\{kaitao.meng, c.masouros\}@ucl.ac.uk
	}
	
	
	\maketitle


\begin{abstract}
In this work, we study integrated sensing and communication (ISAC) networks intending to effectively balance sensing and communication (S\&C) performance at the network level. 
Through the simultaneous utilization of multi-point (CoMP) coordinated joint transmission and distributed multiple-input multiple-output (MIMO) radar techniques, we propose a cooperative networked ISAC scheme to enhance both S\&C services. Then, the tool of stochastic geometry is exploited to capture the S\&C performance, which allows us to illuminate key cooperative dependencies in the ISAC network. Remarkably, the derived expression of the Cramer-Rao lower bound (CRLB) of the localization accuracy unveils a significant finding: Deploying $N$ ISAC transceivers yields an enhanced sensing performance across the entire network, in accordance with the $\ln^2N$ scaling law.
Simulation results demonstrate that compared to the time-sharing scheme, the proposed cooperative ISAC scheme can effectively improve the average data rate and reduce the CRLB.
\end{abstract}   

\newtheorem{thm}{\bf Lemma}
\newtheorem{remark}{\bf Remark}
\newtheorem{Pro}{\bf Proposition}
\newtheorem{theorem}{\bf Theorem}
\newtheorem{Assum}{\bf Assumption}
\newtheorem{Cor}{\bf Corollary}

\section{Introduction}

The integration of sensing and communication (ISAC) emerges as a promising paradigm for next-generation networks \cite{LiuFan2022Integrated}. It employs unified spectrum, waveforms, platforms, and networks to address the issues of spectrum scarcity and circumvent interference caused by separate sensing and communication (S\&C) systems \cite{meng2024integrated}. ISAC can significantly enhance the spectrum, cost, and energy efficiency of S\&C functionalities \cite{Cui2021Integrating}. In the literature, most existing studies on this topic primarily concentrate on the ISAC design at the link/system level \cite{Meng2023Throughput}, but only limited works consider the ISAC design at the network level. 

The network-level ISAC is expected to provide several pronounced benefits compared to conventional single-cell ISAC. In terms of sensing, the ISAC network can expand its coverage to encompass larger surveillance areas, diverse sensing angles, and richer target information by forming multi-static sensing \cite{Shin2017CoordinatedBeamforming}. On the communications front, various ISAC transceivers can collaboratively utilize advanced coordinated multi-point (CoMP) transmission/reception techniques to enhance inter-cell interference management by connecting a single user with multiple base stations (BSs) \cite{Hosseini2016Stochastic}.
Despite the above advantages, networked ISAC brings new technical challenges for wireless resource allocation and user/target scheduling, as it requires a precise quantitative description of the average S\&C performance across the entire ISAC network. 

Stochastic geometry (SG) provides a powerful mathematical tool for communication network analysis \cite{Andrews2011TractableApproach}. For instance, \cite{Andrews2011TractableApproach} proposed a general framework for analysis of the average data rate and the coverage probability in multi-cell communication networks. Furthermore, \cite{chen2022isac} studied an ISAC beam alignment approach for THz networks, where SG is utilized to derive the coverage probability and network throughput performance. In a most recent work \cite{meng2023network}, the coordinated beamforming technique was implemented to mitigate interference for ISAC networks, offering insights into spatial resource allocation. However, the existing literature rarely explores the prospect of leveraging inter-cell interference to enhance the performance of ISAC networks.

Based on the above discussion, we propose a cooperative ISAC scheme that integrates CoMP joint transmission and multi-static sensing. By exploring the new degrees of freedom, i.e., optimizing cooperative BS cluster sizes for S\&C, we provide a promising solution to strike a balance between S\&C performance at the network level.
Then, we quantify ISAC performance through the data rate and Cramér-Rao lower bound (CRLB) \cite{Liu2022Integrated}, and apply SG techniques to conduct performance analysis. This analysis yields insights, directing emerging trends concerning both data rate and CRLB with increasing cooperative S\&C cluster size. 
The main contributions of this paper are summarized as follows:
\begin{itemize}[leftmargin=*]
	\item First, we propose a cooperative ISAC network framework, enabling the realization of CoMP joint transmission and distributed radar with the constraints of backhaul capacity. By incorporating the random BS locations, we derive the scaling law of the CRLB with respect to the BS number, i.e., $\ln^2 N$. We derive a tractable expression of the communication performance with flexible cooperative cluster size. 
	\item Second, in simulations, it is revealed that when provided with more resource blocks and larger backhaul capacity, the proposed cooperative scheme exhibits a greater performance improvement compared to the time-sharing scheme.
\end{itemize}


\section{System Model}

\begin{figure}[t]
	\centering
	\includegraphics[width=7.5cm]{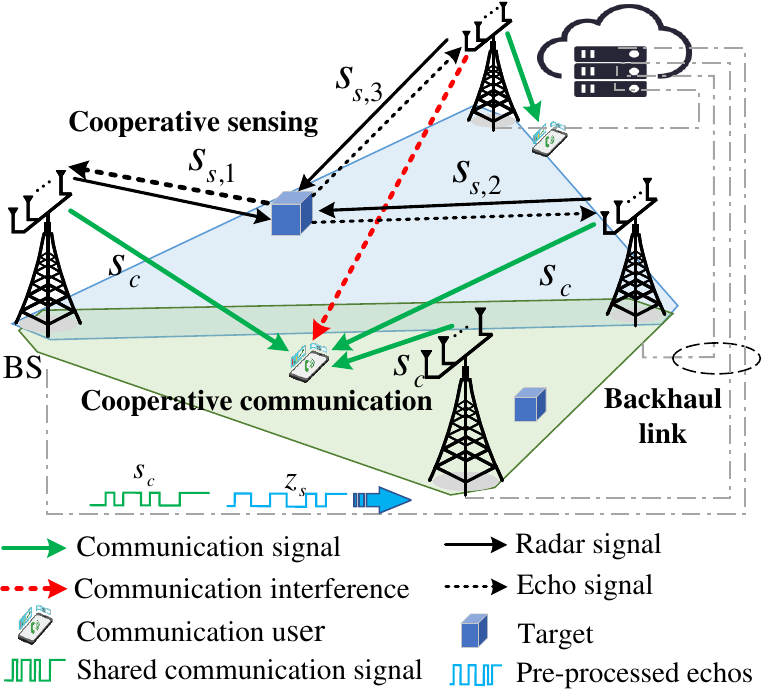}
	\vspace{-3mm}
	\caption{Illustration of cooperative S\&C networks.}
	\label{figure1}
\end{figure}
\subsection{Cooperative ISAC Network Model}
In the considered network, each BS is equipped with $M_{\mathrm{t}}$ transmit antennas and $M_{\mathrm{r}}$ received antennas, and the BS' location follows a homogeneous Poisson point process (PPP) on two-dimensional (2D) space, denoted by $\Phi_b$. Similarly, $\Phi_u$, and $\Phi_s$ respectively represent the point processes modeling the locations of communication users with a single antenna and targets. $\Phi_b, \Phi_u$, and $\Phi_s$ are mutually independent PPPs with intensities $\lambda_b$, $\lambda_u$, and $\lambda_s$, where $\Phi_b = \{ {\bf{d}}_i \in \mathbb{R}^2, \forall i \in \mathbb{N}^+\}$. 

Each communication user is served by $L \ge 1$ cooperative BSs that jointly transmit the same communication data, where the cooperative BS cluster is dynamically formed based on the user's location. Similarly, $N \ge 1$ BSs collaboratively provide localization service for each target, forming distributed multi-static multiple-input multiple-output (MIMO) radars, as shown in Fig.~\ref{figure1}. Each BS utilizes transmit beamforming to send information-bearing signal $s^c$ for the served user, together with a dedicated radar signal $s^{s}_i$ for the detected target. Following the assumption in \cite{Liu2020JointTransmit}, $E[s^s_i (s^c_i)^H] = 0$. Let ${\mathbf{s}_i=\left[s^s_i, s^c_i\right]^T}$, the transmitted signal at the $i$th BS is given by
\vspace{-1.5mm}
\begin{equation}\label{TrasmitSignals}
	{\mathbf{x}}_i = \mathbf{W}_i \mathbf{s}_i = \sqrt{p^c} {\bf{w}}^c_i s^c_i +  \sqrt{p^s} {\bf{w}}^s_i s^s_i,
	\vspace{-1.5mm}
\end{equation}
where ${\bf{w}}^c_i$ and ${\bf{w}}^s_i \in {\cal{C}}^{M_{\mathrm{t}} \times 1}$ are normalized transmit beamforming vectors, $p^s$ and $p^c$ represent the transmit power for S\&C, with $p^s + p^c = 1$ for normalized total transmit power, and $\mathbf{W}_i=\left[p^c \mathbf{w}^c_i, p^s \mathbf{w}^s_i\right] \in {\cal{C}}^{M_{\mathrm{t}} \times 2}$ is the precoding matrix of the BS at $\mathbf{d}_i$. To avoid interference between S\&C, we adopt zero-forcing beamforming to make the analysis tractable. Then, the beamforming vector of the serving BS $i$ can be given by 
\vspace{-1.5mm}
\begin{equation}\label{TransmitBeamforming}
	{\bf{W}}_i = {{\bf{H}}_i}{\left( {\bf{H}}_i ^H {\bf{H}}_i \right)^{-1}},
	\vspace{-1.5mm}
\end{equation}
where ${\bf{H}}_i = [({\bf{h}}^H_{i,c})^T, ({\bf{a}}^H(\theta_i))^T]^T$, ${\bf{h}}^H_{i,c}$ denotes the communication channel from BS $i$ to the typical user, and ${{\bf{a}}^H}(\theta_i )$ represents the sensing channel from BS $i$ to the typical target.

\subsection{Cooperative Sensing Model}
The location of the typical target is denoted by ${\psi}_t = [x_t, y_t]^T$. Assuming that unbiased measurements can be made, we have
\vspace{-1.5mm}
\begin{equation}	
	{\rm{var}}\{\hat{\psi}_t\} = {\rm{E}} [| \hat{\psi}_t - {\psi}_t |^2] \ge \mathrm{CRLB},
	\vspace{-1.5mm}
\end{equation}
where $\hat{\psi}_t=\left[\hat{x}_t, \hat{y}_t\right]^T$ represents the estimated location of the typical target. Assume that the transmitted radar signals $\{s_i^s\}_{i=1}^N$ of the BSs in the cooperative sensing cluster are approximately orthogonal for any time delay of interest \cite{li2008mimo}. Non-coherent MIMO radar is considered in this work due to its practical operational advantages \cite{sadeghi2021target}. Then, the base-band equivalent of the received signal at receiver $j$ is represented as 
\begin{equation}\label{SensingChannel}
	\begin{aligned}
		{{\bf{y}}_{j}}(t) =& \sum\nolimits_{i = 1}^N \sigma  {{{\left\| {{{\bf{d}}_j}} \right\|}^{ - \frac{\beta}{2} }}{\bf{b}}\left( {{\theta _j}} \right){{\left\| {{{\bf{d}}_i}} \right\|}^{ - \frac{\beta}{2} }}{{\bf{a}}^H}\left( {{\theta _i}} \right)}\sqrt{p^s}{{\bm{w}}^s_i}\\
		 &\times {s^s_i}\left( {t - {\tau _{i,j}}} \right) + {\bf{n}}_l(t),
	\end{aligned}
\end{equation}
where $\beta \ge 2$ is the pathloss exponent from the serving BSs to the typical target, $\sigma$ denotes the radar cross section (RCS), $\tau _{i,j}$ denotes the propagation delay of the link from BS $i$ to the typical target and then to BS $j$, and the term ${\bf{n}}_l(t)$ is the additive complex Gaussian noise with zero mean and covariance matrix ${\bm{\Sigma}} = \sigma_s^2 {\bf{I}}_{M_{\rm{r}}}$. In (\ref{SensingChannel}), ${{\bf{a}}^H}(\theta_n ) = [1, \cdots, e^{ {j \pi(M_{\mathrm{t}}-1)  \cos(\theta_n) }}]^T$, and ${\bf{b}}(\theta_j ) = [1, \cdots, e^{ {j \pi(M_{\mathrm{r}}-1)  \cos(\theta_j) }}]$. 

By measuring the range of each monostatic link and bi-static link reflected by the typical target, the target location can be estimated by methods such as the maximum likelihood estimation (MLE) \cite{li1993maximum}. Then, the FIM for estimating the parameter vector ${\psi}_t$ for the considered non-coherent MIMO radar is equal to \cite{sadeghi2021target}
\begin{equation}\label{FIMexpression}
{\bf{F}}_N = |\zeta |^2 \sum\nolimits_{i = 1}^N \sum\nolimits_{j = 1}^N {\left\| {{{\bf{d}}_i}} \right\|}^{ - \beta }{\left\| {{{\bf{d}}_j}} \right\|}^{ - \beta } {  {\left[ {\begin{array}{*{20}{c}}{a_{ij}^2}&{{a_{ij}}{b_{ij}}}\\
		{{a_{ij}}{b_{ij}}}&{b_{ij}^2}
\end{array}} \right]} } ,
\vspace{-1.5mm}
\end{equation}
where ${a_{ij}} = \cos {\theta _i} + \cos {\theta _j}$, ${b_{ij}} = \sin {\theta _i} + \sin {\theta _j}$, and the angle $\theta_i$ is the bearing angle of the $i$-th transmitter to the target with respect to the horizontal axis. In (\ref{FIMexpression}), it follows that $|\zeta |^2 = \frac{p^s G_t G_r B^2 \sigma}{8 \pi f_c^2 \sigma_s^2}$,
where $G_t$ and $G_r$ denote the transmit beamforming gain and receive beamforming gain, $f_c$ is the carrier frequency, and $B^2$ represents the squared effective bandwidth. With the random location of ISAC transceivers, the expected CRLB for any unbiased estimator of the target position is given by
\vspace{-1.5mm}
\begin{equation}	
	 \mathrm{CRLB}= {\rm{E}}_{\Phi_b} \left[\operatorname{tr}\left(\bar {\mathbf{F}}_N^{-1}\left(\psi_t\right)\right)\right].
	 \vspace{-1.5mm}
\end{equation}

\subsection{Cooperative Communication Model}

In this work, we consider the practical implementation of non-coherent joint transmission. The closest BS to the typical user sends collaboration service requests to the other $L-1$ BSs. The set of BSs receiving the service requirement is denoted by $\Phi_c$. Each BS decides whether to accept the request based on its load, where the BSs accepting the request are denoted by $\Phi_a$.  
Following Slivnyak’s theorem, the typical user is assumed to be located at the origin, and its performance is analyzed to generally represent the average performance of all users \cite{Andrews2011TractableApproach}. The index of the closest BS to the typical user is 1. Then, the received signal at the typical user can be given by 
\vspace{-1.5mm}
\begin{equation}
	\begin{aligned}
		y_{c}=& {\left\|\mathbf{d}_1\right\|^{-\frac{\alpha}{2}} \mathbf{h}_{1}^H \mathbf{W}_1 \mathbf{s}_1} + {\sum\nolimits_{i \in \Phi_a }\left\|\mathbf{d}_i\right\|^{-\frac{\alpha}{2}} \mathbf{h}_{i}^H \mathbf{W}_i \mathbf{s}_1}  \\
		&+{\sum\nolimits_{{j \in \{\Phi_b \backslash \Phi_a \backslash \{1\}\}}}\left\|\mathbf{d}_j\right\|^{-\frac{\alpha}{2}} \mathbf{h}_{j}^H \mathbf{W}_j \mathbf{s}_j} + {n_{c}},
		\vspace{-1.5mm}
	\end{aligned}
\end{equation}
where $\alpha \ge 2$ is the pathloss exponent, $\mathbf{h}^H_{i} \sim \mathcal{C N}\left(0, \mathbf{I}_{M_{\mathrm{t}}}\right)$ is the channel vector from the BS at $\mathbf{d}_i$ to the typical user, and $\Phi_a$ is the cooperative BS set. This paper focuses on evaluating the performance of an interference-limited network within dense cell scenarios. Thus, the evaluation is based on the signal-to-interference ratio (SIR) \cite{Park2016OptimalFeedback}.
The SIR of the received signal at the typical user can be expressed as
\vspace{-1.5mm}
\begin{equation}\label{SIRexpression}
	{\rm{SI}}{{\rm{R}}_c} = \frac{ {{g_1}{{\left\| {{{\bf{d}}_1}} \right\|}^{ - \alpha }}} +{\sum\limits_{i \in \Phi_a} {{g_i}{{\left\| {{{\bf{d}}_i}} \right\|}^{ - \alpha }}} }}{{\sum\limits_{j \in \{\Phi_b \backslash \Phi_a \backslash \{1\}\}}  {{g_j}} {{\left\| {{{\bf{d}}_j}} \right\|}^{ - \alpha }}}},
	\vspace{-1.5mm}
\end{equation}
where $g_{1}= p^c\left|\mathbf{h}_{1}^H \mathbf{w}_1^c\right|^2$ and $g_{i}=p^c\left|\mathbf{h}_{i}^H \mathbf{w}_i^c\right|^2$ denotes the effective desired signals' channel gain, and the interference channel gain $g_{j} = p^c\left|\mathbf{h}_{j}^H \mathbf{w}_j^c\right|^2 + p^s\left|\mathbf{h}_{j}^H \mathbf{w}_j^s\right|^2$. The average data rate of users can be given by 
\vspace{-1.5mm}
\begin{equation}
	R_c=\mathrm{E}_{\Phi_b,g_i}[\log (1+\mathrm{SIR}_c)].
	\vspace{-1.5mm}
\end{equation}

\subsection{Limited Backhaul Capacity Model}
In the considered system, each BS is connected to the central unit through backhaul links to share the data and CSI information for CoMP transmission and to collect the echo signals for cooperative sensing, as shown in Fig.~\ref{figure1}. 
It is noteworthy that in a cooperative S\&C system, the sizes of cooperative clusters are practically restricted by the constrained capacity of the backhaul link \cite{Ghimire2015Revisiting}. Therefore, backhaul constraints introduce an additional dimension to strike the balance between S\&C performance. For clarity, we exclude consideration of backhaul traffic related to CSI sharing \cite{Zhang2013Downlink}, and the capacity constraint can be expressed as
\vspace{-1.5mm}
\begin{equation}
	R_c + e \times N \le C_{\text{backhaul}},
	\vspace{-1.5mm}
\end{equation}
where $e$ represents the data rate demand for the sensing cooperation, i.e. for sending pre-processed results (i.e., auto-correlation of signal $\{s^s_i\}_{i=1}^N$ transmitted by each BS), and $C_{\text{backhaul}}$ denotes the backhaul capacity limitation.

\section{Sensing Performance Analysis}
\label{SensingSection}

\subsection{Performance Gain of Cooperative Sensing}
First, by assuming all BSs accept the requests, we derive the closed-form CRLB expression with the consideration of random locations of BSs and targets, based on which, the scaling law of localization accuracy is obtained. First, the CRLB expression can be equivalently transformed into
\vspace{-1.5mm}
\begin{equation}
	\begin{aligned}
		&{\rm{CRLB}} = {{\rm{E}}_{\Phi_b}}\bigg[|\zeta |^{-2} \times \\
		& \frac{{2\sum\nolimits_{i = 1}^N \!{\sum\nolimits_{j = 1}^N d_i^{-\beta}d_j^{ - \beta}\left(1+\cos \left( {{\theta _i} - {\theta _j}} \right)\right) } } } {{\!\!\sum\nolimits_{l = 1}^N {\!\! \sum\nolimits_{k = 1}^N {\! \sum\nolimits_{i \ge k}^N {\! \sum\nolimits^N_{j > \!{\lceil (k - i)N + l \rceil}^+}\! \!{(d_i d_j d_l d_k)^{\!-\beta}} } } } {{\! \left( {{a_{kl}}{b_{ij}} \!-\! {a_{ij}}{b_{kl}}} \right)}^2}}}\! \bigg] \!,
		\vspace{-1.5mm}
	\end{aligned}
\end{equation}
where $d_i = {\left\| {{{\bf{d}}_i}} \right\|}$ and ${\lceil x \rceil}^+ = \max(x,1)$.
To obtain a more tractable CRLB expression, we resort to a simple but tight approximation, and then the following conclusion is proved.
\begin{Pro}\label{SimplifiedWithDis1}
The expected CRLB can be approximated as 
\vspace{-1.5mm}
\begin{equation}\label{SimplifiedExpressionCRLB}
	{\rm{CRLB}} =	\frac{2}{|\zeta |^{2}{\sum\nolimits_{l = 1}^N {\sum\nolimits_{k = 1}^N {E{{\left[ {{d_k}} \right]}^{ - \beta}}E{{\left[ {{d_l}} \right]}^{ - \beta}}} } }}.
	\vspace{-1.5mm}
\end{equation}
\end{Pro}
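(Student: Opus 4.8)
The plan is to exploit the $2\times2$ structure of the Fisher information matrix $\mathbf{F}_N$ in \eqref{FIMexpression} together with the independence, under the PPP, of the bearing angles $\{\theta_i\}$ from the distances $\{d_i\}$. Denoting the entries of $\mathbf{F}_N$ by $F_{11},F_{22},F_{12}$, the trace of the inverse of a symmetric $2\times2$ matrix is $\operatorname{tr}(\mathbf{F}_N^{-1})=(F_{11}+F_{22})/(F_{11}F_{22}-F_{12}^2)$; using $a_{ij}^2+b_{ij}^2=2(1+\cos(\theta_i-\theta_j))$ and a Binet--Cauchy identity for the determinant (which produces the $(a_{ij}b_{kl}-a_{kl}b_{ij})^2$ terms once the indices are ordered), this ratio is precisely the quantity inside $\mathrm{E}_{\Phi_b}$ in the transformed CRLB already displayed. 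It therefore suffices to approximate that expectation.

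Introducing $u_i=\cos\theta_i$, $v_i=\sin\theta_i$ and the weighted sums $S=\sum_{i=1}^N d_i^{-\beta}$, $P=\sum_i d_i^{-\beta}u_i$, $Q=\sum_i d_i^{-\beta}v_i$, $R=\sum_i d_i^{-\beta}u_i^2$, $T=\sum_i d_i^{-\beta}v_i^2$, and $U=\sum_i d_i^{-\beta}u_iv_i$, I would expand $a_{ij}=u_i+u_j$, $b_{ij}=v_i+v_j$ and separate the double sums to obtain the exact identities $F_{11}=2|\zeta|^2(RS+P^2)$, $F_{22}=2|\zeta|^2(TS+Q^2)$, and $F_{12}=2|\zeta|^2(US+PQ)$. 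The crucial algebraic simplification is $R+T=S$ (since $u_i^2+v_i^2=1$), which gives $F_{11}+F_{22}=2|\zeta|^2(S^2+P^2+Q^2)$ and $F_{11}F_{22}-F_{12}^2=4|\zeta|^4[(RT-U^2)S^2+RSQ^2+P^2TS-2USPQ]$.

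Next I would use that $\{\theta_i\}$ are i.i.d. uniform on $[0,2\pi)$ and independent of $\{d_i\}$, so that $\mathrm{E}[u_i^2]=\mathrm{E}[v_i^2]=\tfrac12$ and $\mathrm{E}[u_iv_i]=0$. Hence $R$ and $T$ concentrate on $\tfrac12 S$ and $U$ on $0$, while $P$, $Q$, $U$ are zero-mean with second moments of order $\sum_i d_i^{-2\beta}=O(N)$, negligible against $S^2=O(N^2)$. Keeping only the leading $O(N^2)$ and $O(N^4)$ contributions then yields $F_{11}+F_{22}\approx 2|\zeta|^2 S^2$ and $F_{11}F_{22}-F_{12}^2\approx 4|\zeta|^4(RT-U^2)S^2\approx|\zeta|^4 S^4$, so that $\operatorname{tr}(\mathbf{F}_N^{-1})\approx 2/(|\zeta|^2 S^2)$.

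Finally I would take $\mathrm{E}_{\Phi_b}$ and replace the expected reciprocal by the reciprocal of the expectation, $\mathrm{E}[S^{-2}]\approx(\mathrm{E}[S])^{-2}$, with $\mathrm{E}[S]=\sum_i \mathrm{E}[d_i^{-\beta}]$ by linearity and $(\mathrm{E}[S])^2=\sum_k\sum_l \mathrm{E}[d_k^{-\beta}]\mathrm{E}[d_l^{-\beta}]$, which recovers \eqref{SimplifiedExpressionCRLB}. The main obstacle is showing both approximations are tight: quantifying that the discarded angular cross-terms ($P^2,Q^2$ in the trace and $RSQ^2,P^2TS,USPQ$ in the determinant) are of relative order $O(1/N)$, which follows from the concentration of the weighted sums $R,T,U,P,Q$ about their angular means; and bounding the Jensen gap in $\mathrm{E}[S^{-2}]\approx(\mathrm{E}[S])^{-2}$, which is small precisely when $S=\sum_i d_i^{-\beta}$ concentrates about its mean, the relevant regime for moderate-to-large cooperative clusters with $\beta\ge2$.
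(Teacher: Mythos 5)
Your angular-averaging step is essentially the paper's argument, but carried out more carefully: the paper simply pushes $\mathrm{E}_\theta$ separately into the numerator and denominator of $\operatorname{tr}(\mathbf{F}_N^{-1})$ and ``ignores the items with lower order,'' whereas your exact identities $F_{11}=2|\zeta|^2(RS+P^2)$, $F_{22}=2|\zeta|^2(TS+Q^2)$, $F_{12}=2|\zeta|^2(US+PQ)$ together with $R+T=S$ make precise which terms are being dropped and why ($P,Q,U$ are zero-mean with second moments $\sum_i d_i^{-2\beta}$, small relative to $S^2$). That part is sound and is an improvement on the paper's own write-up, with one caveat: for the regime the paper actually uses ($\beta=2$, $d_i^{-\beta}\asymp \lambda_b\pi/i$) the relative size of the discarded terms is $O(1/\ln^2 N)$, not $O(1/N)$, and it is not uniformly small over configurations because $d_1^{-\beta}$ is heavy-tailed; so the concentration claim needs to be stated as holding for typical configurations rather than with an $O(1/N)$ rate.

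The genuine gap is in your final step. You write $\mathrm{E}[S^{-2}]\approx(\mathrm{E}[S])^{-2}$ with $\mathrm{E}[S]=\sum_i\mathrm{E}[d_i^{-\beta}]$, which yields a denominator $\sum_k\sum_l\mathrm{E}[d_k^{-\beta}]\,\mathrm{E}[d_l^{-\beta}]$. The proposition instead contains $\sum_k\sum_l\mathrm{E}[d_k]^{-\beta}\,\mathrm{E}[d_l]^{-\beta}$ (expectation of the distance, then raised to $-\beta$), which is what feeds the subsequent substitution $\mathrm{E}[d_n]\approx\sqrt{n/(\lambda_b\pi)}$ and the $\ln^2 N$ law. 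These two quantities are not interchangeable here: for the $n$-th nearest point of a PPP, $\mathrm{E}[d_1^{-\beta}]=\infty$ whenever $\beta\ge 2$ (the integrand $\sim r^{1-\beta}$ near $r=0$), so $\mathrm{E}[S]=\infty$ and your chain of approximations literally returns $\mathrm{CRLB}=0$ rather than \eqref{SimplifiedExpressionCRLB}. The missing move --- which the paper makes implicitly under the label ``independent distance of different BSs'' --- is to replace each random $d_k^{-\beta}$ by the typical value $(\mathrm{E}[d_k])^{-\beta}$ \emph{before} inverting, i.e., to approximate $S$ by $\tilde S=\sum_k(\mathrm{E}[d_k])^{-\beta}$ and then set $\mathrm{E}[S^{-2}]\approx\tilde S^{-2}$; this avoids the divergent moment entirely. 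You need to state and justify that per-term substitution explicitly (and note that the ordered distances of a PPP are not actually independent, so the factorization across $k\ne l$ is itself an approximation); without it your derivation does not recover the stated formula.
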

\begin{proof}
	Please refer to Appendix A.
\end{proof}

Interestingly, it can be found that the value of expected CRLB in Proposition \ref{SimplifiedWithDis1} is only determined by the expected distance from the BS to the typical target. It is verified that (\ref{SimplifiedExpressionCRLB}) achieves a good approximation by Monte Carlo simulations, as shown in Section \ref{simulations}.
Furthermore, the expected distance from the $n$th closest BS to the typical target can be expressed as $E \left[ {{d_n}} \right] =  { {\frac{{\Gamma \left( n + \frac{1}{2} \right)}}{{\sqrt{\lambda_b \pi} \Gamma (n)}}} } \approx \sqrt{\frac{n}{\lambda_b \pi}}$. Then, the CRLB expression can be further approximated as
\vspace{-1.5mm}
\begin{equation}\label{CRLB_expression}
	{\rm{CRLB}} \approx \frac{2}{|\zeta |^{2}{{\lambda_b ^\beta}{\pi ^\beta}\sum\nolimits_{l = 1}^N {\sum\nolimits_{k = 1}^N { k^{-\frac{\beta}{2}} l^{-\frac{\beta}{2}}} } }}.
	\vspace{-1.5mm}
\end{equation}
With general setup $\beta = 2$, we further derive the scaling law of the localization accuracy as follows.

\begin{theorem}\label{SimplifiedWithDis3}
	With infinity cooperative cluster size $N$, the expected CRLB can be given by
	\vspace{-1.5mm}
	\begin{equation}
		\mathop {\lim }\limits_{N \to \infty } {\rm{CRLB}} \times {{\ln }^2}N = \frac{1}{|\zeta |^{2}{{\lambda_b ^2}{\pi ^2}}}.
		\vspace{-1.5mm}
	\end{equation}
\end{theorem}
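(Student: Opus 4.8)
The plan is to start from the tractable approximation in (\ref{CRLB_expression}), specialize it to the regime $\beta = 2$ stated in the theorem, and exploit the fact that the remaining double sum decouples into the square of a harmonic series, whose logarithmic growth is exactly what produces the $\ln^2 N$ scaling.

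First I would set $\beta = 2$ in (\ref{CRLB_expression}), so that the summand $k^{-\beta/2} l^{-\beta/2}$ becomes $k^{-1} l^{-1}$. Because this factors over the two indices, the double sum separates as $\sum_{l=1}^N \sum_{k=1}^N k^{-1} l^{-1} = \left( \sum_{k=1}^N k^{-1} \right)^2 = H_N^2$, where $H_N$ denotes the $N$th harmonic number. The expected CRLB therefore depends on $N$ only through $H_N^2$, reducing to a single closed-form quantity of the form $\mathrm{CRLB} \propto 1 \big/ (|\zeta|^2 \lambda_b^2 \pi^2 H_N^2)$.

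Next I would multiply by $\ln^2 N$ and pass to the limit. The decisive ingredient is the classical asymptotic expansion $H_N = \ln N + \gamma + O(1/N)$, with $\gamma$ the Euler--Mascheroni constant, from which $H_N / \ln N \to 1$ and hence $\ln^2 N / H_N^2 \to 1$ as $N \to \infty$. Substituting this limit into the product $\ln^2 N \cdot \mathrm{CRLB}$ cancels the $N$-dependence entirely and leaves exactly the constant asserted in the theorem.

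The step I expect to carry the real content is controlling the lower-order terms in $H_N^2 = \ln^2 N + 2\gamma \ln N + O(1)$: one must verify that, after dividing numerator and denominator by $\ln^2 N$, the $2\gamma \ln N$ and $O(1)$ contributions become asymptotically negligible, so that only the dominant quadratic-logarithm term governs the scaling. This is immediate once the ratio is normalized by $\ln^2 N$, yet it is the precise place where the $\ln^2 N$ law is pinned down. I would close by noting that the equality holds in the asymptotic sense consistent with Proposition~\ref{SimplifiedWithDis1} and the expected-distance estimate $E[d_n] \approx \sqrt{n/(\lambda_b \pi)}$, on which the entire chain of approximations rests.
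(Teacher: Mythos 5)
Your proof follows exactly the same route as the paper's Appendix~B: specialize (\ref{CRLB_expression}) to $\beta=2$, factor the double sum as $H_N^2$, and invoke $H_N=\ln N+\gamma+O(1/N)$ so that $\ln^2 N/H_N^2\to 1$. One caveat you share with the paper itself: carrying the factor $2$ in the numerator of (\ref{CRLB_expression}) through this limit actually yields $2/\left(|\zeta|^2\lambda_b^2\pi^2\right)$ rather than the stated $1/\left(|\zeta|^2\lambda_b^2\pi^2\right)$, a constant discrepancy that neither your argument nor the paper's resolves.
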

\begin{proof}
	Please refer to Appendix B.
\end{proof}

The CRLB scaling law ${{\ln }^2}N$ derived in Theorem \ref{SimplifiedWithDis3} is highly useful to indicate the cooperative sensing design. 

\subsection{Acceptance Probability with Limited Resource Blocks}
\label{AccperationProbability}
Using the target-centric clustering model, the center unit sends localization service requests to the $N$ closest BSs relative to the typical target. Let $\psi$ be an integer representing the maximum load, i.e., the maximum number of targets that can be simultaneously served by the BS.
Then, if a BS receives $N$ requests, we assume it will randomly choose $\psi$ targets to provide services. In this case, the acceptance probability of the BS which receives service requests can be given as follows:
\begin{thm}\label{AcceptationProbability}
	When each target requests $N$ BSs to provide localization services, the acceptance probability of the BS can be given by:
	\vspace{-1.5mm}
	\begin{equation}
		\kappa_s = \frac{\Gamma {\left( \psi ,\mu_s \bar N \right)}}{{(\psi  - 1)!}} + \sum\limits_{n = \psi  + 1}^\infty  {\frac{{ {\psi} {{\left( {\mu_s \bar N} \right)}^n}}}{{n \times n!}}} {e^{ - \mu_s \bar N}},
		\vspace{-1.5mm}
	\end{equation}
	where $\mu_s = {\frac{{{\lambda _s}}}{{{\lambda _b}}}}$ and $\bar N = {\frac{\Gamma(N+\frac{1}{2})^2}{\Gamma(N)^2}}$.
\end{thm}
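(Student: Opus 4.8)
The plan is to reduce the statement to the distribution of the number of localization requests arriving at a typical BS, and then average the per-request acceptance probability over that distribution. By Slivnyak's theorem I place the typical BS at the origin. A target requests this BS precisely when the BS lies among the target's $N$ nearest BSs, equivalently when the target lies inside the disk centred at the origin whose radius equals that target's distance to its $N$-th nearest BS. The obstacle here is that this radius is random and couples neighbouring BSs, so the exact request count is not Poisson; the tractable route is to replace the random $N$-th nearest distance by its mean $E[d_N]=\frac{\Gamma(N+1/2)}{\sqrt{\lambda_b\pi}\,\Gamma(N)}$ stated earlier, which fixes a deterministic requesting region of area $\pi E[d_N]^2$. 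Under this approximation the requesting targets are exactly the points of the PPP $\Phi_s$ that fall in this fixed disk, so by the PPP counting property their number $n$ is Poisson with mean $\lambda_s\pi E[d_N]^2=\frac{\lambda_s}{\lambda_b}\,\lambda_b\pi E[d_N]^2=\mu_s\bar N$, which is precisely the parameter appearing in the claim, since $\bar N=\lambda_b\pi E[d_N]^2=\frac{\Gamma(N+1/2)^2}{\Gamma(N)^2}$.

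Next I would encode the load rule. Conditioned on the BS receiving $n$ requests, if $n\le\psi$ the BS can serve every requesting target, so a tagged request is accepted with probability $1$; if $n>\psi$ the BS selects $\psi$ of the $n$ targets uniformly at random, so by exchangeability a tagged request is accepted with probability $\psi/n$. Writing $\Lambda=\mu_s\bar N$ and $P(n)=\frac{\Lambda^n}{n!}e^{-\Lambda}$, the acceptance probability is the expectation of this per-request acceptance over the Poisson law, $\kappa_s=\sum_{n\le\psi}P(n)\cdot 1+\sum_{n>\psi}P(n)\cdot\frac{\psi}{n}$, i.e. a clean split into an under-loaded and an over-loaded contribution.

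Finally I would evaluate the two sums. The over-loaded sum is immediate after simplifying $\frac{\psi}{n}\cdot\frac{\Lambda^n}{n!}=\frac{\psi\Lambda^n}{n\cdot n!}$, yielding the second term $\sum_{n=\psi+1}^{\infty}\frac{\psi(\mu_s\bar N)^n}{n\cdot n!}e^{-\mu_s\bar N}$. For the under-loaded contribution I would invoke the standard relation between the truncated Poisson series and the upper incomplete Gamma function, $\sum_{m=0}^{\psi-1}\frac{\Lambda^m}{m!}e^{-\Lambda}=\frac{\Gamma(\psi,\Lambda)}{(\psi-1)!}$, which is exactly the first term $\frac{\Gamma(\psi,\mu_s\bar N)}{(\psi-1)!}$. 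Substituting $\Lambda=\mu_s\bar N$ and $\bar N=\frac{\Gamma(N+1/2)^2}{\Gamma(N)^2}$ then reproduces the stated expression. I expect the genuinely delicate step to be the first one, namely justifying the Poisson model for the request count and pinning its mean to $\mu_s\bar N$ through $E[d_N]$, whereas the conditioning argument and the incomplete-Gamma identity are routine; the only bookkeeping subtlety is the handling of the boundary index $n=\psi$ when matching the two summation ranges.
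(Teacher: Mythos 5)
Your proposal is correct and follows essentially the same route as the paper's Appendix C: both replace the random association region by its mean area $\pi E[d_N]^2=\bar N/\lambda_b$, treat the resulting request count as Poisson with mean $\mu_s\bar N$, condition on the load to assign acceptance probability $1$ (under-loaded) or $\psi/n$ (over-loaded), and evaluate the truncated Poisson sum via the upper incomplete Gamma function. The off-by-one ambiguity at $n=\psi$ that you flag is present in the paper's own derivation as well (its intermediate sum runs to $n=\psi$ while the stated formula corresponds to $n\le\psi-1$), so it is not a gap introduced by your argument.
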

\begin{proof}
	Please refer to Appendix C.
\end{proof}

According to Lemma \ref{AcceptationProbability}, the acceptance probability is monotonically increasing with the number of resource blocks $\psi$, and is monotonically decreasing with the target-BS density ratio $\mu_s$ and cluster size $N$. 
Then, the expected CRLB with the consideration of acceptance probability can be expressed as 
\vspace{-1.5mm}
\begin{equation}
	{\rm{CRLB}}_a =  \frac{1}{\kappa_s ^2 |\zeta |^{2}{{ \lambda_b ^2}{\pi ^2}{{\ln }^2}N}}.
	\vspace{-1mm}
\end{equation}

\section{Communication Performance Analysis}
\label{CommunicationPerformance}
To implement CoMP joint transmission, the closest BS sends service requests to the other $L-1$ closest BSs to the typical user. Similarly, if a BS receives more than $\psi$ requests, it will randomly choose $\psi-1$ users to provide services besides the typical user. Similar to the sensing acceptance probability, when each user requests $L$ BSs to provide communication services, the acceptance probability of the BS can be given by $\kappa_c = \frac{{\Gamma \left( {\psi ,\mu_c \bar L} \right)}}{{(\psi  - 1)!}} + \sum\limits_{n = \psi}^\infty  {\frac{{\left( {\psi  - 1} \right){{\left( {\mu_c \bar L} \right)}^n}}}{{(n-1) \times n!}}} {e^{ - \mu_c \bar L}}$, where $\mu_c = \frac{\lambda_u}{\lambda_b}$ and $\bar L = {\frac{\Gamma(L+\frac{1}{2})^2}{\Gamma(L)^2}}$.

According to \cite{hamdi2010useful}, for uncorrelated variables $X$ and $Y$, we have
\vspace{-1mm}
\begin{equation}\label{CommunicationBasicEquation}
	{\rm{E}}\left[ {\log \left( {1 + \frac{X}{Y}} \right)} \right] \! = \! \int_0^\infty  {\frac{1}{z}} \left( {1 - {\rm{E}}\left[{e^{ - z \left[ X\right] }}\right]} \right){\rm{E}}\left[{e^{ - z\left[ Y \right]}}\right]{\rm{d}}z.
	\vspace{-1mm}
\end{equation}
Then, under a given distance $r$ from the typical user to the closest BS, the conditional expectation of data rate can be derived as follows:
\vspace{-1mm}
\begin{equation}
	\begin{aligned}
		&{\rm{E}}\left[ {\log \left( {1 + \mathrm{SIR}_c} \right)} \big| r \right] \\
		=& {\rm{E}} \! \left[ {\log \left( \! {1 + \frac{{g_{1}} + \sum\nolimits_{{{i}} \in {\Phi_a}} {g_{i}} \left\| {\bf{d}}_i \right\|^{-\alpha} r^{\alpha}}{ \sum\nolimits_{{{j}} \in \{\Phi_b \backslash \Phi_a \backslash \{1\}\}} g_j \left\| {\bf{d}}_j \right\|^{-\alpha} r^\alpha }} \right)} \right] \\
		=& \int_0^\infty  {\frac{{1 - {\rm{E}}\left[ {{e^{ - zg_{1}}}} \right] {\rm{E}}\left[ {{e^{ - z U}}} \right]}}{z}} {\rm{E}}\left[ {{e^{ - z I_{1}}}} \right] {\rm{E}}\left[ {{e^{ - z I_{2}}}} \right]{\rm{d}}z,
		\vspace{-1mm}
	\end{aligned}
\end{equation}
where $U = \sum\nolimits_{{{i}} \in {\Phi_a}} {g_{i}} r^{\alpha}$, $ I_{1} = \sum\nolimits_{{{i}} \in \{\Phi_c \backslash \Phi_a\}}  {{g_{i}}} {{\left\| {{{\bf{d}}_{{i}}}} \right\|}^{ - \alpha }}{r^\alpha }$, and $ I_{2} = \sum\nolimits_{{{i}} = L+1}^\infty  {{g_{i}}} {{\left\| {{{\bf{d}}_{{i}}}} \right\|}^{ - \alpha }}{r^\alpha }$. Here, $I_1$ represents the interference from the BSs declining the cooperation requests, and $I_2$ represents the interference from the BS located beyond the cooperative request cluster. $g_{1}$ and $g_{i}$ are the effective desired signal channel gain, $g_{1}, g_{i} \sim \Gamma \left( M_{\mathrm{t}} - 1, p^c\right)$ \cite{Hosseini2016Stochastic}. According to the definition below (\ref{SIRexpression}), we can derive the distribution of $g_{j}$ based on the moment matching technique \cite{Hosseini2016Stochastic}, and it follows that $g_{j} \sim \Gamma (1 , 1)$. Thus, the useful signal power can be given by ${\rm{E}}\left[ {{e^{ - zg_{1}}}} \right] = {\left( {1 + p^cz} \right)^{1 - {M_{\mathrm{t}} }}}$.
Then, we derive tight bounds on the Laplace transform of cooperative transmission power and communication interference as follows:

\begin{thm}\label{LaplaceTransform}
	With the closest BS at a distance $r$, the Laplace transforms of $U$, $I_1$, and $I_2$ can be given by
	\vspace{-1mm}
	\begin{equation}
		{\rm{E}}\left[ {{e^{ - z U}}} \right] \!= \exp (  - \pi \kappa_c \lambda_b {r^2}{\rm{H}}_1( { zp^c,M_{\mathrm{t}}-1,\alpha ,\eta_{L} } ) ),
	\end{equation}
	\begin{equation}
		{\rm{E}}\left[ {{e^{ - zI_1}}} \right] =   \exp (  - \pi (1-\kappa_c ) \lambda_b {r^2}{\rm{H}}_1( {z,1,\alpha ,\eta_{L} } )  ),
	\end{equation}
	\begin{equation}
		{\rm{E}}\left[ {{e^{ - zI_2}}} \right] = \exp (  - \pi \lambda_b {r^2}{\rm{H}}_2( {z,\alpha ,\eta_{L} } ) ),
	\end{equation}
where ${\rm{H}}_1\left( {x,K,\alpha ,\eta_L } \right)  = \frac{1}{{{\eta ^2}}}\left( {1 - \frac{1}{{{{\left( {1 + x{\eta ^\alpha }} \right)}^K}}}} \right) + \frac{1}{{{{\left( {1 + x} \right)}^K}}} - 1 + K{x^{\frac{2}{\alpha }}}\!\left(\! {B\left( {\frac{x}{{x + 1}},1 \! - \! \frac{2}{\alpha },K + \frac{2}{\alpha }} \right) \! - \! B\left( {\frac{{x{\eta ^\alpha }}}{{x{\eta ^\alpha } + 1}},1 - \frac{2}{\alpha }, K + \frac{2}{\alpha }} \right)} \right)$, ${\rm{H}}_2\left( {x,\alpha ,\eta_L } \right) = {x^{\frac{2}{\alpha }}}B\left( {\frac{x}{{x + {\eta_L ^{ - \alpha }}}},1 - \frac{2}{\alpha },1 + \frac{2}{\alpha }} \right) + \frac{1}{{{\eta_L ^2}}}\left( {{{{{\left( {1 + x{\eta_L ^\alpha }} \right)}^{-1}}}} - 1} \right)$, $\eta_{L} = \frac{r}{r_L}$, and $r_L = \|{\bf{d}}_L \|$. Here, $B(a,b,c) = \int_0^a t^(b-1) (1-t)^{c-1}{\rm{d}}t$ is the incomplete Beta function.
\end{thm}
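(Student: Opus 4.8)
The plan is to treat each of the random sums $U$, $I_1$, and $I_2$ as an interference (shot-noise) functional over a thinned PPP and to evaluate its Laplace transform through the probability generating functional (PGFL) of the Poisson process. Conditioning on the serving distance $r=\|{\bf d}_1\|$ and on the cluster radius $r_L=\|{\bf d}_L\|$, I would split the plane into the annulus $\{r<\|x\|<r_L\}$, which hosts the request cluster $\Phi_c$, and the exterior $\{\|x\|>r_L\}$, which carries the out-of-cluster interference $I_2$. Inside the annulus the acceptance decision is modeled as an independent thinning: a BS joins $\Phi_a$ (hence $U$) with probability $\kappa_c$ and declines (hence $I_1$) with probability $1-\kappa_c$. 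Approximating the point processes restricted to the annulus and to the exterior by PPPs of intensities $\kappa_c\lambda_b$, $(1-\kappa_c)\lambda_b$, and $\lambda_b$ is exactly what converts the correlated exact expressions into the tight bounds claimed.

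For any such sum the PGFL gives ${\rm E}[e^{-z\,(\cdot)}]=\exp(-\lambda\int_{\mathcal A}(1-{\rm E}_g[e^{-zg\|x\|^{-\alpha}r^\alpha}])\,dx)$, with $\mathcal A$ the relevant region and $\lambda$ the thinned intensity. I would then insert the moment generating functions of the Gamma marks: for the cooperative gains $g_i\sim\Gamma(M_{\rm t}-1,p^c)$ one has ${\rm E}_g[e^{-zgv}]=(1+zp^c v)^{-(M_{\rm t}-1)}$, and for the interfering gains $g_j\sim\Gamma(1,1)$ one has $(1+zv)^{-1}$, where $v=\|x\|^{-\alpha}r^\alpha$. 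Passing to polar coordinates and rescaling by $u=\|x\|/r$ factors out the common prefactor $\pi\lambda_b r^2$ and reduces every Laplace transform to a one-dimensional radial integral, namely $2\int_1^{1/\eta_L}(1-(1+xu^{-\alpha})^{-K})u\,du$ for $U$ and $I_1$ (with $(x,K)=(zp^c,M_{\rm t}-1)$ and $(z,1)$ respectively) and $2\int_{1/\eta_L}^{\infty}(1-(1+zu^{-\alpha})^{-1})u\,du$ for $I_2$.

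The remaining work is to evaluate these radial integrals in closed form, which I would carry out by integration by parts against the antiderivative $u^2/2$. For the annular integrals the boundary contributions reproduce the elementary terms $\eta_L^{-2}-1-\eta_L^{-2}(1+x\eta_L^\alpha)^{-K}+(1+x)^{-K}$, while the residual integral $K\alpha x\int u^{1-\alpha}(1+xu^{-\alpha})^{-K-1}\,du$, under the substitution $t=xu^{-\alpha}/(1+xu^{-\alpha})$, becomes a difference of incomplete Beta functions $B(\cdot,1-\tfrac{2}{\alpha},K+\tfrac{2}{\alpha})$ evaluated at the two endpoints; together these give precisely ${\rm H}_1$. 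The exterior integral is handled identically: its boundary term yields $\eta_L^{-2}((1+z\eta_L^\alpha)^{-1}-1)$ and the residual integral the single Beta term $z^{2/\alpha}B(\tfrac{z}{z+\eta_L^{-\alpha}},1-\tfrac{2}{\alpha},1+\tfrac{2}{\alpha})$, i.e. ${\rm H}_2$. Convergence of the exterior integral at $u\to\infty$ requires $\alpha>2$, consistent with the standing assumption.

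The main obstacle is not the integration but the probabilistic reduction in the first step. The order statistics $r$ and $r_L$ are determined by the very PPP that also generates the interferers, so the sums are not independent of the conditioning; in particular, once $r_L$ is fixed the BSs inside the annulus form a fixed-count binomial process rather than a PPP (whereas, by the restriction property, the exterior points do remain Poisson given $r_L$). I would therefore condition on $(r,r_L)$, treat the exterior exactly, and approximate the annular binomial process by a PPP of the matched intensity together with the independent-thinning model of the acceptance decision; it is this annular approximation that downgrades the equalities to the tight bounds in the statement, and arguing its tightness for moderate cluster sizes is the delicate part of the proof.
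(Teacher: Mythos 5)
Your proposal follows essentially the same route as the paper's Appendix D: condition on $(r,r_L)$, model acceptance as independent thinning with probability $\kappa_c$, apply the PGFL of the (thinned) PPP with the Gamma moment generating functions of the channel gains, and evaluate the resulting radial integrals into the incomplete Beta functions defining ${\rm H}_1$ and ${\rm H}_2$. Your additional observation---that the in-annulus points given $r_L$ form a binomial rather than a Poisson process, so the annular PPP approximation is what downgrades the equalities to tight bounds---is a point the paper leaves implicit, but it does not change the argument.
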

\begin{proof}
	Please refer to Appendix D.
\end{proof}

Based on the obtained Laplace transforms of $U$, $I_1$, and $I_2$, the expected data rate can be obtained in Theorem \ref{CommunicationTightExpression}.

\begin{theorem}\label{CommunicationTightExpression}
	The communication performance can be given by $R_c = \int_0^\infty  \int_0^1 \frac{2\left( {L - 1} \right)\eta_L {\left( {1 - {\eta_L ^2}} \right)^{L - 2}}}{z}\bigg( \frac{1}{{\left( {1 - \kappa_c } \right){{\rm{H}}_1}\left( {z,1,\alpha,\eta_L } \right) + {{\rm{H}}_2}\left( {z,\alpha,\eta_L } \right) + 1}} -  
	\frac{{{{\left( {1 + p^c z} \right)}^{ 1- {M }_{\mathrm{t}} }}}}{{\kappa_c {{\rm{H}}_1}\left( {z p^c,M_{\mathrm{t}} - 1,\alpha,\eta_L } \right) + \left( {1 - \kappa_c } \right){{\rm{H}}_1}\left( {z,1,\alpha,\eta_L } \right) + {{\rm{H}}_2}\left( {z,\alpha,\eta_L } \right) + 1}} \bigg){\rm{d}}\eta_L {\rm{d}}z$
\end{theorem}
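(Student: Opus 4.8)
The plan is to substitute the three Laplace transforms of Theorem~\ref{LaplaceTransform}, together with ${\rm{E}}[e^{-zg_1}] = (1+p^c z)^{1-M_{\mathrm{t}}}$, into the conditional rate integral, then to average over the nearest-BS distance $r$, and finally to average over the ratio $\eta_L = r/r_L$. First I would expand $1 - {\rm{E}}[e^{-zg_1}]{\rm{E}}[e^{-zU}]$ and multiply by ${\rm{E}}[e^{-zI_1}]{\rm{E}}[e^{-zI_2}]$, splitting the $z$-integrand into two terms that are exponential in $r^2$. Collecting exponents, the integrand conditioned on $(r,\eta_L)$ becomes $\frac{1}{z}\big(e^{-\pi\lambda_b r^2 A} - (1+p^c z)^{1-M_{\mathrm{t}}} e^{-\pi\lambda_b r^2 D}\big)$, where $A = (1-\kappa_c){\rm{H}}_1(z,1,\alpha,\eta_L) + {\rm{H}}_2(z,\alpha,\eta_L)$ and $D = \kappa_c {\rm{H}}_1(zp^c,M_{\mathrm{t}}-1,\alpha,\eta_L) + A$, since the $I_1$, $I_2$ factors supply $A$ and the extra $U$, $g_1$ factors append $\kappa_c {\rm{H}}_1(zp^c,\cdot)$ and the prefactor $(1+p^c z)^{1-M_{\mathrm{t}}}$.

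The second step is to decondition on $r$. The nearest-BS distance has density $f_r(r) = 2\pi\lambda_b r\, e^{-\pi\lambda_b r^2}$, so for any $c$ not depending on $r$ one has $\int_0^\infty 2\pi\lambda_b r\, e^{-\pi\lambda_b r^2 (c+1)}\,{\rm{d}}r = \tfrac{1}{c+1}$. Pulling this average inside the $z$-integral turns $e^{-\pi\lambda_b r^2 A}\mapsto \tfrac{1}{A+1}$ and $e^{-\pi\lambda_b r^2 D}\mapsto \tfrac{1}{D+1}$, which produces exactly the two denominators $(1-\kappa_c){\rm{H}}_1(z,1,\alpha,\eta_L)+{\rm{H}}_2(z,\alpha,\eta_L)+1$ and $\kappa_c{\rm{H}}_1(zp^c,M_{\mathrm{t}}-1,\alpha,\eta_L)+(1-\kappa_c){\rm{H}}_1(z,1,\alpha,\eta_L)+{\rm{H}}_2(z,\alpha,\eta_L)+1$ in Theorem~\ref{CommunicationTightExpression}; the $+1$ is the signature of this nearest-neighbour exponential.

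The third step is to average over $\eta_L$. Using the mapping that $\{\pi\lambda_b R_n^2\}$ are the points of a unit-rate one-dimensional PPP, conditioned on $R_L$ the normalized squared distances $R_i^2/R_L^2$ of the $L-1$ nearest BSs are i.i.d.\ uniform on $[0,1]$, so $\eta_L^2 = R_1^2/R_L^2$ is their minimum and obeys ${\rm{P}}(\eta_L^2>v)=(1-v)^{L-1}$. Differentiating and changing variable to $\eta_L$ gives the density $2(L-1)\eta_L(1-\eta_L^2)^{L-2}$ on $[0,1]$, which is precisely the weight multiplying the bracket in the statement. Integrating the $r$-averaged rate against this density and restoring the outer $\int_0^\infty \frac{1}{z}{\rm{d}}z$ then yields the claimed double integral.

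The hard part will be the interplay between $r$ and $\eta_L$: both are built from the same order statistics $R_1$ and $R_L$, so they are not independent, and a fully joint average of $(R_1,R_L)$ would attach an $\eta_L^2$ inside each ${\rm{H}}$ term and raise the denominators to the power $L$. The clean first-power, $+1$ structure in Theorem~\ref{CommunicationTightExpression} corresponds to averaging the nearest-neighbour distance and the ratio in a decoupled manner, so I would make this simplification explicit (and consistent with the tightness already claimed for the bounds of Theorem~\ref{LaplaceTransform}). A secondary, routine point is to justify the Fubini exchange of the $z$-, $r$- and $\eta_L$-integrals and the convergence of the $z$-integral: near $z=0$ one checks ${\rm{H}}_1(0,\cdot)={\rm{H}}_2(0,\cdot)=0$, so the difference of the two bracketed terms is $O(z)$ and the $\tfrac{1}{z}$ singularity is removable, while as $z\to\infty$ both the Laplace prefactor and the reciprocal denominators decay fast enough for integrability.
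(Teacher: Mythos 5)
Your proposal follows essentially the same route as the paper's Appendix E: substitute the Laplace transforms of Lemma \ref{LaplaceTransform} and ${\rm{E}}[e^{-zg_1}]=(1+p^cz)^{1-M_{\mathrm{t}}}$ into the conditional rate integral, decondition on $r$ via $f_r(r)=2\pi\lambda_b re^{-\pi\lambda_b r^2}$ to produce the $\tfrac{1}{(\cdot)+1}$ denominators, and average over $\eta_L$ with the density $2(L-1)\eta_L(1-\eta_L^2)^{L-2}$ (which the paper cites from a reference rather than rederiving). Your observation that $r$ and $\eta_L$ are jointly built from the order statistics $R_1,R_L$ and that the product-form averaging is a decoupling simplification is accurate --- the paper performs exactly this decoupling, silently, by writing the integrand against $f_{\eta_L}(\eta)f_r(r)$.
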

\begin{proof}
	Please refer to Appendix E. 
\end{proof}

According to Theorem \ref{CommunicationTightExpression}, the average data rate increases with the BS density $\lambda_b$ and the number of resource blocks.

\section{Simulation Results}
\label{simulations}
The numerical simulations are averaged over various network typologies and realizations of small-scale channel fading. The system parameters are given as follows: the number of transmit antennas $M_{\mathrm{t}} = 4$, the number of receive antennas $M_{\mathrm{r}} = 5$, the transmit power $P_{\mathrm{t}} = 1$W at each BS, the average RCS $\sigma = 1$, the BS density $\lambda_b = 1/km^2$, $\lambda_u = 1/km^2$, $\lambda_s = 1/km^2$, $\sigma^2_s = - 80$dB, pathloss coefficients $\alpha = 4$, $\beta = 2$, backhaul capacity $C_{\text{backhaul}} = 6$, and resource block number $\psi = 15$.

\subsection{Sensing Performance}

\begin{figure*}[!t]
	\begin{minipage}[t]{0.33\linewidth}
		\centering
		\includegraphics[width=5.2cm]{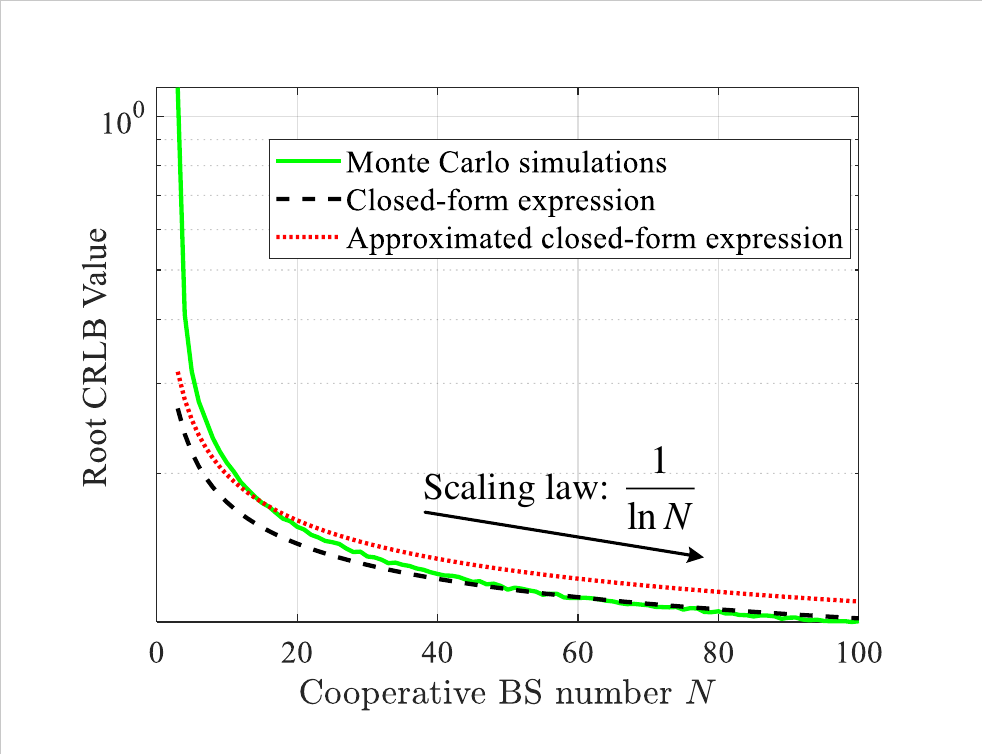}
		\vspace{-3mm}
		\caption{CRLB value comparison with $\kappa_c = 1$.}
		\label{figure6}
	\end{minipage}%
	\begin{minipage}[t]{0.33\linewidth}
		\centering
		\includegraphics[width=5.4cm]{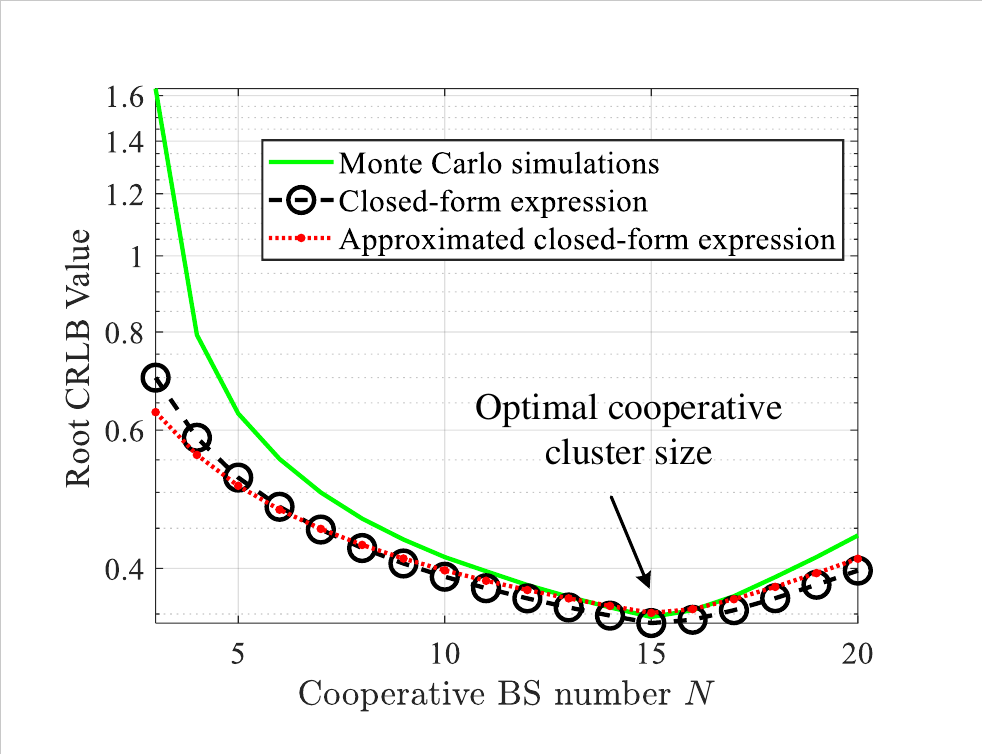}
		\vspace{-3mm}
		\caption{CRLB value with the acceptance probability.}
		\label{figure4}
	\end{minipage}
	\begin{minipage}[t]{0.33\linewidth}
		\centering
		\includegraphics[width=5.2cm]{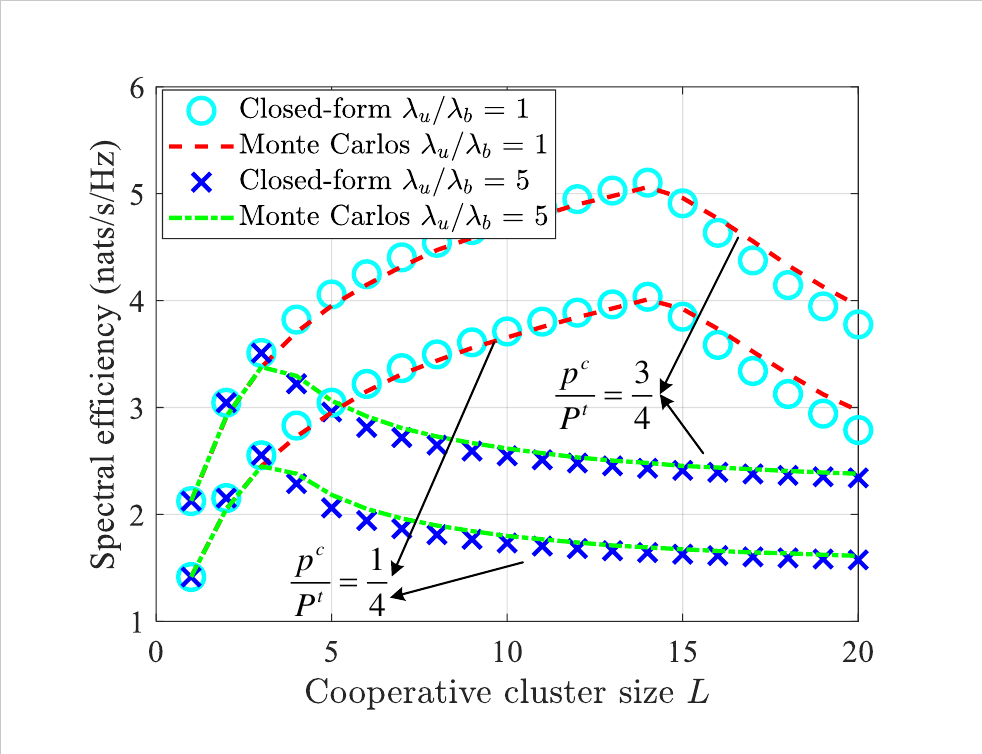}
		\vspace{-3mm}
		\caption{Cooperative communication performance.}
		\label{figure7}
	\end{minipage}
\end{figure*}

In Fig.~\ref{figure6}, the tractable expression derived in Theorem \ref{SimplifiedWithDis3} provides a remarkably tight approximation, especially for larger cooperative BS number $N$. It is noteworthy that when the number of cooperative BSs is relatively small, for instance, $N \le 4$, the closed-form expressions exhibit a slight deviation from Monte Carlo simulations. This is mainly due to the less precise calculation of the expectation operation involving trigonometric functions when the number of ISAC BSs is small.
Furthermore, Fig.~\ref{figure6} reveals that augmenting the number of cooperative BSs results in substantial accuracy improvement when the total number of BSs is limited, yet it yields only incremental performance gains for $N \ge 10$. This is expected because more participation of randomly located BSs in cooperation leads to increased signal attenuation for distant BSs, resulting in a performance gain that is significantly less than that observed for nearby BSs. 

Furthermore, with $\psi = 15$, Fig.~\ref{figure4} shows that the CRLB decreases first and then increases when $N \ge 15$, i.e., the optimal size of the cooperative sensing cluster equals $\psi$. The main factor is that, as the average number of service targets per BS exceeds the allocated resource blocks, each BS is likely to reach full load. Consequently, some requests sent from nearby targets may be declined, resulting in the forming of a cooperative sensing cluster with BSs situated at a larger distance from the typical target. As the average number of service requests continues to rise, the distance between transceivers and targets also increases.

Fig.~\ref{figure7} illustrates that the results of the original expression for $R_c$ in Theorem \ref{CommunicationTightExpression} are consistent with the simulation results. 
As depicted in Fig.~\ref{figure7}, for any given communication power ${p^c}$, the communication spectral efficiency $R_c$ initially increases and then decreases with the cooperative cluster size $L$. The primary cause for this trend is the increasing involvement of more users in the service, leading to a reduction in the average acceptance probability for each user. It is evident that, across various power ${p^c}$, there is a consistent optimal value $L^*$ for the same user-BS density $\frac{\lambda_u}{\lambda_b}$ that maximizes spectral efficiency. Meanwhile, this optimal value $L^*$ decreases as $\frac{\lambda_u}{\lambda_b}$ increases. This is attributed to the fact that with a higher $\frac{\lambda_u}{\lambda_b}$, more users may send service requests to the BSs, consequently increasing the load of each BS and diminishing the acceptance probability.

\begin{figure}[t]
	\centering
	\includegraphics[width=6cm]{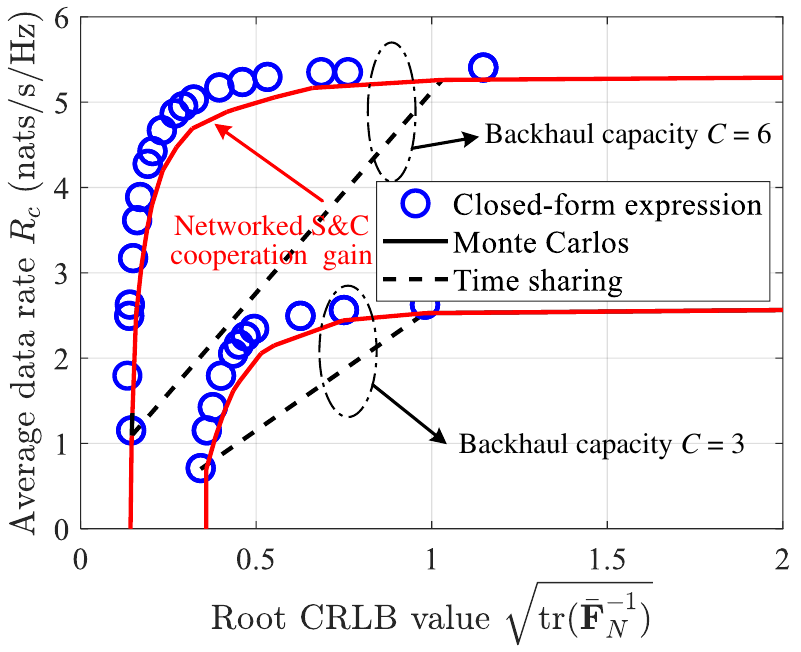}
	\vspace{-3mm}
	\caption{Performance tradeoff versus different backhaul capacity constraints.}
	\label{figure9}
\end{figure}

We validate the proposed cooperative ISAC scheme, encompassing both the S\&C performance boundary. First, we compare the effectiveness of the time-sharing scheme based on two corner points to illustrate the performance of the cooperative ISAC scheme under various setups.
The tradeoff profile between the average data rate $R_c$ and the average ${\rm{CRLB}}_a$ is depicted in Fig. \ref{figure9}, confirming both the accuracy of analytical results and the flexibility of our proposed cooperative ISAC networks. As the backhaul capacity increases, the performance boundaries of S\&C expand significantly. Also, it is observed from Fig. \ref{figure9} that the achievable S\&C performance region of the optimal cooperative scheme becomes much larger than that of the time-sharing scheme as the backhaul capacity increases. 
 This outcome stems from the augmented capacity of backhaul links, enabling the network to effectively coordinate transmit power and multi-cell resources, thereby enhancing the cooperative cluster design gains for S\&C.


\section{Conclusion}
In this paper, we proposed a novel cooperative scheme in ISAC networks by simultaneously adopting the CoMP joint transmission and distributed radar techniques. With SG tools, the S\&C performance expressions are described analytically. We revealed that the average cooperative sensing performance CRLB in the entire ISAC network scales with $\ln^2N$. The simulation results demonstrate the benefits of the proposed cooperative ISAC scheme and provide insightful guidelines for designing practical large-scale ISAC networks.

\section*{Appendix A: \textsc{Proof of Proposition \ref{SimplifiedWithDis1}}}

First, the expected transmit beamforming gain can be given by ${\rm{E}}[\left|\mathbf{a}^H(\theta_i) \mathbf{w}_i^s\right|^2] = M_{\rm{t}} - 1$, i.e., $G_t = M_{\rm{t}} - 1$.
Since BS' location follows a homogeneous PPP, the angle $\theta_i$ and distance $d_i$ are independent for each BS. Thus, it follows that ${{\rm{E}}_{d,\theta }}\left[ {{\rm{tr}}\left( {{{\tilde {\bf{F}}}^{ - 1}}} \right)} \right] = {{\rm{E}}_d}\left[ {{{\rm{E}}_\theta }\left[ {{\rm{tr}}\left( {{{\tilde {\bf{F}}}^{ - 1}}} \right)} \right]} \right]$. Then, the CRLB can be transformed as shown in (\ref{SimplifiedDerivation}), at the top of the next page.

\begin{equation}\label{SimplifiedDerivation}
	\begin{aligned}
		{\rm{CRLB}} 
		& \! = \! {{\rm{E}}_d} \!\left[\! {\frac{{2{{\rm{E}}_\theta }[ {\sum\nolimits_{i = 1}^N {\sum\nolimits_{j = 1}^N {{D _{ij}}} (1+\cos ( {{\theta _i} - {\theta _j}} )) } } ]}}{{{{\rm{E}}_\theta }[ {\sum\nolimits_{l = 1}^N {\sum\nolimits_{k = 1}^N {\sum\nolimits_{i \ge k}^N {\sum\nolimits_{j > \!{\lceil (k - i)N + l \rceil}^+}^N \!\! {{D _{kl}}{D _{ij}}} } } } {Y}} ]}}} \!\right] \\
		&\approx {{\rm{E}}_d}\left[ {\frac{2}{{\sum\nolimits_{l \ne k}^N {\sum\nolimits_{k = 1}^N {{D _{kl}}} } }}} \right] \\ &\approx \frac{2}{{\sum\nolimits_{l \ne k}^N {\sum\nolimits_{k = 1}^N {E{{\left[ {{d_k}} \right]}^{ - \beta}}E{{\left[ {{d_l}} \right]}^{ - \beta}}} } }}.
		\vspace{-3mm}
	\end{aligned}
\end{equation}
In (\ref{SimplifiedDerivation}), $D_{i,j} = d_i^{-\beta}d_j^{-\beta}$, ${Y}^2 = \left( {{a_{kl}}{b_{ij}} - {a_{ij}}{b_{kl}}} \right)^2$ and the first approximation is adopted by ignoring the items with lower order, and the second approximation holds due to the independent distance of different BSs.

\section*{Appendix B: \textsc{Proof of Theorem \ref{SimplifiedWithDis3}}}
First, when $\beta = 2$, the denominator of (\ref{CRLB_expression}) can be expressed as $\sum\nolimits_{l = 1}^N {\frac{1}{l}\sum\nolimits_{k = 1}^N {\frac{1}{k}} } $.
According to the sum of series, we have ${H_N} = 1 + \frac{1}{2} + \frac{1}{3} +  \ldots . + \frac{1}{N} \approx \ln (N) + \gamma  + \frac{1}{{2N}}$ and $\mathop {\lim }\limits_{N \to \infty } \left[ {{H_N} - \ln (N)} \right] = \gamma ,\gamma  = 0.577$. When $N \to \infty$, it follows that
\vspace{-1mm}
	\begin{align}
		&\mathop {\lim }\limits_{N \to \infty } \sum\nolimits_{l = 1}^N {\frac{1}{l}\sum\nolimits_{k = 1}^N {\frac{1}{k}} } \!=\! \mathop {\lim }\limits_{N \to \infty } \!\sum\nolimits_{l = 1}^N {\frac{1}{l}\left( {\ln N\! + \!\gamma  + \frac{1}{{2N}}} \right)} \nonumber \\
		&= \mathop {\lim }\limits_{N \to \infty } {\left( {\ln \left( N \right) + \gamma  + \frac{1}{{2N}}} \right)^2} = {\ln ^2} N .
		\vspace{-1mm}
	\end{align}
This thus completes the proof.

\section*{Appendix C: \textsc{Proof of Lemma \ref{AcceptationProbability}}}
Under the proposed clustering model, the average association area of each BS is given by ${\cal{A}} = \frac{{{\Gamma ^2}\left( {N + 0.5} \right)}}{{{\lambda _b}{\Gamma ^2}\left( N \right)}}$.
Further, the users are PPP distributed with density $\lambda_s$, thus the average number of users associated with a BS is ${\cal{A}} \lambda_s$. The probability that a BS accepts the localization serving request is equivalent to the probability that the number of users within the BS association area is less than the maximum load, $\psi$. Let ${\cal{N}}\left(|{\cal{A}}| \lambda_u \right)$ be the number of users with density $\lambda_u$ in a geographical area $|{\cal{A}}|$. Let $\bar N = \frac{{{\Gamma ^2}\left( {N + 0.5} \right)}}{{{\Gamma ^2}\left( N \right)}}$, $\kappa_s$ can be calculated as follows:
\vspace{-1mm}
\begin{equation}
	\begin{aligned}
		\kappa_s  = & \sum\limits_{n = 0}^\psi  {\Pr } \left[ {{\cal N}\left( {{\lambda _u} |{{\cal{A}}} | = n} \right)} \right] + \sum\limits_{n = \psi  + 1}^\infty  {\Pr } \left[ {{\cal N}\left( {{\lambda _u} |{{\cal{A}}} | = n} \right)} \right]\frac{\psi }{n} \\
		=& \frac{{\Gamma \left( {\psi ,\mu_s \bar N} \right)}}{{\psi !}} + \sum\limits_{n = \psi  + 1}^\infty  {\frac{{\psi {{\left( {\mu_s \bar N} \right)}^n}}}{{n \times n!}}} {e^{ - \mu_s \bar N}}.
		\vspace{-1mm}
	\end{aligned}
\end{equation}
This thus completes the proof.

\section*{Appendix D: \textsc{Proof of Lemma \ref{LaplaceTransform}}}
The interference term with a given distance $r$ from the typical user to the closest BS, can be derived by utilizing Laplace transform. For ease of analysis, we introduce a geometric parameter $\eta_L = \frac{\left\| {{{\bf{d}}_1}} \right\|}{\left\| {{{\bf{d}}_{L}}} \right\|} $, defined as the distance to the closest BS normalized by the distance to the furthest BS in the cluster for typical user. When $\left\| {{{\bf{d}}_1}} \right\| = r$ and $\left\| {{{\bf{d}}_L}} \right\| = r_L$, we have
\vspace{-1mm}
\begin{align}\label{CommunicationEquationExpression}
	&{{\cal L}_{{I_{2}}}}(z)  =  {\rm{E}}_{\Phi_b, g_i} \!\left[ {\exp \left( { - z{{r}^\alpha }\sum\nolimits_{i = L+1}^\infty  \! {{{\left\| {{{\bf{d}}_i}} \right\|}^{ - \alpha }}} {{| {{\bf{h}}_{i}^H{{\bf{W}}_i}} |}^2}} \right)} \! \right] \nonumber  \\
	&\overset{(a)}{=}  {\rm{E}}_{\Phi_b}\!\left[ \!\left( \prod _{{{{\bf{d}}_i}} \in \Phi_b \textbackslash {\cal{O}}(0,r_L)} {   {  {{{{\left( {1 + z{r^\alpha }{{\left\| {{{\bf{d}}_i}} \right\|}^{ - \alpha }}} \right)}^{-1}}}} } dx} \right) \bigg| r, r_L \right]  \nonumber \\
	&\overset{(b)}{=} \exp \left( { - 2\pi \lambda_b \int_{{r_L}}^\infty  {\left( {1 - {{{{\left( {1 + z{r^\alpha }{x^{ - \alpha }}} \right)}^{-1}}}}} \right)} xdx} \right)  \nonumber \nonumber \\
	&\overset{(c)}{=} \exp \bigg(  - \pi \lambda_b {r^2}{\rm{H}}_2\left( {z,\alpha ,\eta_{L} } \right) \bigg) ,
	\vspace{-1mm}
\end{align}
where ${\rm{H}}_2\left( {x,\alpha ,\eta_L } \right) = {x^{\frac{2}{\alpha }}}B\left( {\frac{x}{{x + {\eta_L ^{ - \alpha }}}},1 - \frac{2}{\alpha },1 + \frac{2}{\alpha }} \right) + \frac{1}{{{\eta_L ^2}}}\left( {{{{{\left( {1 + x{\eta_L ^\alpha }} \right)}^{-1}}}} - 1} \right)$.
In (\ref{CommunicationEquationExpression}), ($a$) follows from the fact that the small-scale channel fading is independent of the BS locations and that the interference power imposed by each interfering BS at the typical user is distributed as $\Gamma(1,1)$. To derive ($b$), we use the probability generating functional (PGFL) of a PPP with density $\lambda_b$. ($c$) follows from distribution integral strategies and $\eta_{L} = \frac{r}{r_L}$.  

Similarly, the Laplace transform of useful signals can be given by 
\vspace{-1mm}
\begin{equation}
	{\rm{E}}\!\left[ {{e^{ - z U}}} \right] \!=\! \exp \!\bigg( \! - \pi \kappa_c \lambda_b {r^2}{\rm{H}}_1\left( { zp^c,M_{\mathrm{t}}-1,\alpha ,\eta_{L} } \right) \!\bigg),
	\vspace{-1mm}
\end{equation}
where ${\rm{H}}_1\left( {x,K,\alpha ,\eta_L } \right)  = \frac{1}{{{\eta ^2}}}\left( {1 - \frac{1}{{{{\left( {1 + x{\eta ^\alpha }} \right)}^K}}}} \right) + \frac{1}{{{{\left( {1 + x} \right)}^K}}} - 1 + K{x^{\frac{2}{\alpha }}}\!\left(\! {B\left( {\frac{x}{{x + 1}},1 - \frac{2}{\alpha },K + \frac{2}{\alpha }} \right) \! - \! B\left( {\frac{{x{\eta ^\alpha }}}{{x{\eta ^\alpha } + 1}},1 - \frac{2}{\alpha }, K + \frac{2}{\alpha }} \right)} \right)$.
Similarly, for the intra-cluster interference, i.e., the BSs declining the service requirement, the Laplace transform $I_1$ can be given by
\vspace{-1mm}
\begin{equation}\label{CommunicationEquationExpression2}
	\begin{aligned}
		{{\cal L}_{{I_{1}}}}(z) &= {\rm{E}}_{\Phi_b, g_i}\left[ {\exp \left( { - z{{\left\| {{{\bf{d}}_1}} \right\|}^\alpha }\sum\nolimits_{i \in \{\Phi_c \backslash \Phi_a \}}  {{{\left\| {{{\bf{d}}_i}} \right\|}^{ - \alpha }}} {g_j}} \right)} \right]  \\
		&{=} \exp \bigg(  - \pi (1-\kappa_c) \lambda_b {r^2}{\rm{H}}_1\left( {z,1,\alpha ,\eta_{L} } \right) \bigg) .
		\vspace{-1mm}
	\end{aligned}
\end{equation}
This thus completes the proof.

\section*{Appendix E: \textsc{Proof of Theorem \ref{CommunicationTightExpression}}}

According to (\ref{CommunicationBasicEquation}), the date rate under the conditional distance $r$ can be given by ${\rm{E}}\left[ {\log \left( {1 + {\rm{SI}}{{\rm{R}}_c}} \right)\mid r} \right] 
= \int_0^\infty  {\frac{{1 - {\rm{E}}\left[ {{e^{ - z{P_c}{g_1}}}} \right]{\rm{E}}\left[ {{e^{ - z U }}} \right]}}{z}} {\rm{E}}\left[ {{e^{ - z I_1 }}} \right]{\rm{E}}\left[ {{e^{ - z I_2 }}} \right]{\rm{d}}z$.
Then, the conditional expected spectrum efficiency can be given by
\vspace{-1mm}
\begin{equation}\label{longEquation2}
	\begin{aligned}
		&\int_0^\infty \!\! \int_0^1 \!\! \int_0^\infty \!\! \frac{{1 \! - \! {\rm{E}}\left[ {{e^{ - z{g_1}}}} \right]\exp \left( { - \pi \kappa_c \lambda_b {r^2}{{\rm{H}}_1}\left( { zp^c,M_{\mathrm{t}}-1,\alpha ,\eta_{L} } \right)} \right)}}{z}  \\
		&\times \exp \left( { - \pi \left( {1 - \kappa_c } \right)\lambda_b {r^2}{{\rm{H}}_1}\left( {z,1,\alpha ,\eta_{L} } \right)} \right)\exp \big(  - \pi \lambda_b {r^2}  \\
		& {{\rm{H}}_2}\left( {z,\alpha ,\eta_{L} } \right)\big) {f_{\eta_L}}\left( \eta  \right) {f_r}\left( r \right) {\rm{d}} r  {\rm{d}}\eta  {\rm{d}}z.
		\vspace{-1mm}
	\end{aligned}
\end{equation}
where ${f_r}\left( r \right) = 2\pi {\lambda _b}r{e^{ - \pi {\lambda _b}{r^2}}}$.
According to Lemma 3 in \cite{Zhang2014StochasticGeometry}, the PDF of the distance ratio $\eta_L$ can be given by 
\begin{equation}\label{RatioEquation}
	{f_{\eta_L}}\left( x \right) = 2\left( {L - 1} \right)x{\left( {1 - {x^2}} \right)^{L - 2}}.
\end{equation}
Then, by plugging the Laplace transforms of useful signal and interference in Lemma \ref{LaplaceTransform} to ${\rm{E}}\left[ {\log \left( {1 + {\rm{SI}}{{\rm{R}}_c}} \right)\mid r} \right]$, the equation (\ref{CommunicationTightExpression}) can be obtained. 
This thus completes the proof.

\footnotesize  	
\bibliography{mybibfile}

\begin{thebibliography}{10}
\providecommand{\url}[1]{#1}
\csname url@samestyle\endcsname
\providecommand{\newblock}{\relax}
\providecommand{\bibinfo}[2]{#2}
\providecommand{\BIBentrySTDinterwordspacing}{\spaceskip=0pt\relax}
\providecommand{\BIBentryALTinterwordstretchfactor}{4}
\providecommand{\BIBentryALTinterwordspacing}{\spaceskip=\fontdimen2\font plus
\BIBentryALTinterwordstretchfactor\fontdimen3\font minus
  \fontdimen4\font\relax}
\providecommand{\BIBforeignlanguage}[2]{{%
\expandafter\ifx\csname l@#1\endcsname\relax
\typeout{** WARNING: IEEEtran.bst: No hyphenation pattern has been}%
\typeout{** loaded for the language `#1'. Using the pattern for}%
\typeout{** the default language instead.}%
\else
\language=\csname l@#1\endcsname
\fi
#2}}
\providecommand{\BIBdecl}{\relax}
\BIBdecl

\bibitem{LiuFan2022Integrated}
F.~Liu \emph{et~al.}, ``Integrated sensing and communications: Toward
  dual-functional wireless networks for {6G} and beyond,'' \emph{IEEE J. Sel.
  Areas Commun.}, vol.~40, no.~6, pp. 1728--1767, Jun. 2022.

\bibitem{meng2024integrated}
K.~Meng \emph{et~al.}, ``Integrated sensing and communication meets smart
  propagation engineering: Opportunities and challenges,'' \emph{arXiv preprint
  arXiv:2402.18683}, 2024.

\bibitem{Cui2021Integrating}
Y.~Cui, F.~Liu, X.~Jing, and J.~Mu, ``Integrating sensing and communications
  for ubiquitous {IoT}: Applications, trends, and challenges,'' \emph{IEEE
  Net.}, vol.~35, no.~5, pp. 158--167, Sep./Oct. 2021.

\bibitem{Meng2023Throughput}
K.~Meng \emph{et~al.}, ``Throughput maximization for {UAV}-enabled integrated
  periodic sensing and communication,'' \emph{IEEE Trans. Wireless Commun.},
  vol.~22, no.~1, pp. 671--687, Jan. 2023.

\bibitem{Shin2017CoordinatedBeamforming}
W.~Shin \emph{et~al.}, ``Coordinated beamforming for multi-cell {MIMO-NOMA},''
  \emph{IEEE Commun. Lett.}, vol.~21, no.~1, pp. 84--87, Jan. 2017.

\bibitem{Hosseini2016Stochastic}
K.~Hosseini, W.~Yu, and R.~S. Adve, ``A stochastic analysis of network {MIMO}
  systems,'' \emph{IEEE Trans. Signal Process.}, vol.~64, no.~16, pp.
  4113--4126, Aug. 2016.

\bibitem{Andrews2011TractableApproach}
J.~G. Andrews, F.~Baccelli, and R.~K. Ganti, ``A tractable approach to coverage
  and rate in cellular networks,'' \emph{IEEE Trans. Commun.}, vol.~59, no.~11,
  pp. 3122--3134, Nov. 2011.

\bibitem{chen2022isac}
W.~Chen, L.~Li, Z.~Chen, B.~Ning, G.~Wang, and T.~Quek, ``An {ISAC}-based beam
  alignment approach for enhancing terahertz network coverage,'' \emph{arXiv
  preprint arXiv:2212.01728}, 2022.

\bibitem{meng2023network}
K.~Meng, C.~Masouros, G.~Chen, and F.~Liu, ``Network-level integrated sensing
  and communication: Interference management and {BS} coordination using
  stochastic geometry,'' \emph{arXiv preprint arXiv:2311.09052}, 2023.

\bibitem{Liu2022Integrated}
F.~Liu, Y.~Cui, C.~Masouros, J.~Xu, T.~X. Han, Y.~C. Eldar, and S.~Buzzi,
  ``Integrated sensing and communications: Towards dual-functional wireless
  networks for {6G} and beyond,'' \emph{IEEE J. Sel. Areas Commun.}, vol.~40,
  no.~6, pp. 1728--1767, Jun. 2022.

\bibitem{Liu2020JointTransmit}
X.~Liu \emph{et~al.}, ``Joint transmit beamforming for multiuser {MIMO}
  communications and {MIMO} radar,'' \emph{IEEE Trans. Signal Process.},
  vol.~68, pp. 3929--3944, 2020.

\bibitem{li2008mimo}
J.~Li and P.~Stoica, \emph{{MIMO} radar signal processing}.\hskip 1em plus
  0.5em minus 0.4em\relax John Wiley \& Sons, 2008.

\bibitem{sadeghi2021target}
M.~Sadeghi, F.~Behnia, R.~Amiri, and A.~Farina, ``Target localization geometry
  gain in distributed {MIMO} radar,'' \emph{IEEE Trans. Signal Process.},
  vol.~69, pp. 1642--1652, 2021.

\bibitem{li1993maximum}
J.~Li and R.~Compton, ``Maximum likelihood angle estimation for signals with
  known waveforms,'' \emph{IEEE Trans. Signal Process.}, vol.~41, no.~9, pp.
  2850--2862, 1993.

\bibitem{Park2016OptimalFeedback}
J.~Park, N.~Lee, J.~G. Andrews, and R.~W. Heath, ``On the optimal feedback rate
  in interference-limited multi-antenna cellular systems,'' \emph{IEEE Trans.
  Wireless Commun.}, vol.~15, no.~8, pp. 5748--5762, Aug. 2016.

\bibitem{Ghimire2015Revisiting}
J.~Ghimire and C.~Rosenberg, ``Revisiting scheduling in heterogeneous networks
  when the backhaul is limited,'' \emph{IEEE J. Sel. Areas Commun.}, vol.~33,
  no.~10, pp. 2039--2051, 2015.

\bibitem{Zhang2013Downlink}
Q.~Zhang, C.~Yang, and A.~F. Molisch, ``Downlink base station cooperative
  transmission under limited-capacity backhaul,'' \emph{IEEE Trans. Wireless
  Commun.}, vol.~12, no.~8, pp. 3746--3759, 2013.

\bibitem{hamdi2010useful}
K.~A. Hamdi, ``A useful lemma for capacity analysis of fading interference
  channels,'' \emph{IEEE Trans. Commun.}, vol.~58, no.~2, pp. 411--416, Feb.
  2010.

\bibitem{Zhang2014StochasticGeometry}
X.~Zhang and M.~Haenggi, ``A stochastic geometry analysis of inter-cell
  interference coordination and intra-cell diversity,'' \emph{IEEE Trans.
  Wireless Commun.}, vol.~13, no.~12, pp. 6655--6669, Dec. 2014.

\end{thebibliography}
\bibliographystyle{IEEEtran}
\normalsize
	
\end{document}